\documentclass[conference]{IEEEtran}  
\usepackage{amsmath,graphicx}
\usepackage{xcolor}

\usepackage{tikz}
\usetikzlibrary{positioning}
\usetikzlibrary{arrows.meta}
\usetikzlibrary{calc}
\usetikzlibrary{shapes.geometric,shapes.misc}
\newtheorem{definition}{\textbf{Definition}}
\newtheorem{lemma}{\textbf{Lemma}}
\newtheorem{theorem}{\textbf{Theorem}}

\newtheorem{remark}{\textbf{Remark}}

\usepackage{amssymb}
\usepackage{algorithm}
\usepackage{algpseudocode}
\usepackage{amsmath, amssymb}
\usepackage{tikz}
\usetikzlibrary{positioning, arrows.meta, shapes, calc, backgrounds, fit}
\usetikzlibrary{positioning,arrows.meta,decorations.pathreplacing,shapes}
\usepackage{pgfplots}
\pgfplotsset{compat=1.18}
\usepackage{amsmath}
\usepackage{graphicx}

\usepackage{amsmath, amssymb, amsfonts}  
\usepackage{graphicx}                    
\usepackage{tikz}                        
\usepackage{cite}                         
\usepackage{url}                          
\usepackage{bm}                           
\usepackage{multirow}                     
\usetikzlibrary{backgrounds}
\usepackage{pgfplots}

\def\BibTeX{{\rm B\kern-.05em{\sc i\kern-.025em b}\kern-.08em
    T\kern-.1667em\lower.7ex\hbox{E}\kern-.125emX}}
\begin{document}

\title{RankGuard-Polar: Private–Public Finite Length Polar Codes with Rank-Certified Leakage $^{\star}$}

\author{\IEEEauthorblockN{Hassan Tavakoli}
\IEEEauthorblockA{\textit{School of EECS} \\
\textit{ Oregon State University}\\
Oregon, OR, 97331 USA \\
tavakolh@oregonstate.edu}
\and
\IEEEauthorblockN{Thinh Nguyen, \emph{Senior Member, IEEE}}
\IEEEauthorblockA{\textit{School of EECS} \\
\textit{ Oregon State University}\\
Oregon, OR, 97331 USA \\
thinhq@eecs.oregonstate.edu}
\and
\IEEEauthorblockN{Bella Bose, \thanks{$^{\star}$ This work was supported by the National Science Foundation under Grant No. CCF-2417898.} \emph{Life Fellow, IEEE}}
\IEEEauthorblockA{\textit{School of EECS} \\
\textit{ Oregon State University}\\
Oregon, OR, 97331 USA \\
Bella.Bose@oregonstate.edu}
}

\maketitle

\begingroup
\renewcommand\thefootnote{}
\footnotetext{$^{\star}$ This work was supported by the National Science Foundation under Grant No. CCF:SHF:2417898.}
\endgroup

\begin{abstract}
We introduce \textbf{RankGuard-Polar}, a framework for safely publishing a subset of polar codeword coordinates over shared public resources. We assume a strong eavesdropper who has access to the channel input, i.e., the transmitted codeword coordinates published on a public resource access model. Working over \(\mathbb F_2\) and focusing on time-shared public/private BEC uses, we show that leakage from a published index set \(\mathbf{P}\) admits an exact algebraic characterization comes from an information-theoretic viewpoint, and we construct an explicit linear extractor ($R$) that identifies the leaked linear combinations. Building on this identity, we (i) give efficient procedures to compute and certify leakage for any \(\mathbf{P}\), (ii) propose a practical fast algorithm with provable efficiency.
\end{abstract}

\begin{IEEEkeywords}
Polar codes, Public Channel, Private Channel, frozen bits, mutual information, rank methods
\end{IEEEkeywords}


\section{Introduction and Overview}
\label{sec:intro}

Channel coding combats channel noise by adding structured redundancy so the receiver can correct errors and recover the original message reliably \cite{shannon1948,cover2006,lin2004}.  That redundancy, however, increases the number of transmitted symbols and thus enlarges the attack surface available to an adversary who can observe or tamper with some transmissions.  For this reason the interplay between coding and cryptography has been studied extensively: redundancy can aid legitimate decoding while, if combined with secret structure or randomness, hindering an attacker.  An example is the McEliece cryptosystem, which uses joint error-correcting code with secret linear transformations so legitimate receivers can decode efficiently while attackers face a computationally hard decoding problem \cite{mceliece1978,niederreiter1986}.
Previous works have shown that polar codes can be combined with other communication settings and design goals. In \cite{tavakoli2018polarization,tavakoli_good_index_polarized_relay}, the authors studied polarized multi-relay channels and good index selection for relay settings, and also the author of \cite{tavakoli2017polarization}, later extended polarization ideas to a multiple-access-based construction. Motivated by this line of work, we study polar codes with rank-based leakage certification in a finite-length public/private framework.

Secrecy can be embedded into the encoder--decoder pair, but using the generator matrix \(G\) itself as a secret key is restrictive since \(G\) must preserve reliability and decoding efficiency. Practical code-based cryptosystems (e.g., McEliece) therefore hide a structured code via secret scramblers and permutations rather than choosing an arbitrary generator matrix.
In many practical deployments, a transmitter can use shared public resources to increase throughput or reduce cost. This raises a simple operational question:
\emph{Under the strong assumption that the eavesdropper has access to the input of the public channel (i.e., the eavesdropper observes the channel input), we ask: How much information about the confidential message leaks when we place some codeword coordinates on that public resource?
}  

In this paper, we exploit a practical polar-code mechanism for information-theoretic masking: the polar transform \(G_N\) is fixed, and the encoder already designates a set of positions as \emph{frozen}. By drawing frozen bits from fresh secret randomness, one can mask published coordinates without changing the transform or code dimension. However, publish/no-publish decisions must be certified so that no, or as few as possible, independent linear combinations of the information bits are exposed. We address this finite-blocklength problem by developing exact algebraic leakage certificates for any public index set \(\mathbf{P}\), together with constructive extractors and practical selection algorithms. Our focus is on polar-coded systems over stationary erasure channels, in particular the BEC.

The rest of the paper is organized as follows.
In Section~\ref{sec:system_model},  we introduce our motivation and preliminaries. 
Section~\ref{sec:RankGuard} develops \textit{RankGuard-Polar}: we prove the exact leakage identity.
Section~\ref{sec:Algorithms} describes practical algorithms for selecting public indices and shows how we can reduce the complexity.
In Section~\ref{sec:Relations}, we compare our finite-block, public/private viewpoint to classical wiretap and general cryptography problem.
Finally, Section~\ref{sec:Conclusion} summarizes contributions and outlines directions for extensions.
\section{Motivation and Preliminaries} \label{sec:system_model}

\subsection{Definitions:}

We adopt the following notational conventions throughout the paper. All linear algebra is over the binary field $\mathbb F_2$ unless stated otherwise. Vectors, bold letters like $\mathbf{u}$ , and matrices, capital letters like $G$, are representing. Shannon entropy and mutual information are $H$ and $I$, respectively, in bits and base~2. $e_i$ is the $i$-th standard basis column vector. For a vector $\mathbf{v}$, $\|\mathbf{v}\|_0$ denotes Hamming weight, also, $diag(\mathbf{v})$ is the $n\times n$ diagonal matrix with $\mathbf{v}$ on the diagonal. For a finite set $S$, $|S|$ is its cardinality. We also write $[1:N] \triangleq \{1,\dots,N\}$, and \(u^N \triangleq (u_1,\dots,u_N)\). 
We also adopt the standard polar-code notation introduced by \cite{Arikan2009}.For blocklength \(N = 2^n\), the polar transform is given by
\(
G_N = G_2^{\otimes n},
\)
where
\(
G_2 =
\begin{bmatrix}
1 & 0 \\
1 & 1
\end{bmatrix},
\)
and \(\otimes\) denotes the Kronecker product. We also assume for index sets $\mathbf{A}$ and matrix \(G\), \(G_{\mathbf{A}}\equiv (G_N)_{\mathbf{A}}\).

Let \( \mathbf{A}\subseteq\{1,\dots,N\}\) be the information set and \(\mathbf{ F}=\{1,\dots,N\}\setminus  \mathbf{A}\) the frozen set. Let \(\mathbf{P}\subseteq\{1,\dots,N\}\) denote a subset of output positions of polar encoder.

\begin{definition}[Field and vectors]
We write \(\mathbf{u}=(\mathbf{u_A},\mathbf{u_F})\) where \(\mathbf{u_A}\in\mathbb F_2^{1\times |A|}\) are the information bits and \(\mathbf{u_F}\in\mathbb F_2^{1\times|F|}\) are the frozen bits. For a set of published indices, public indices, \(\mathbf{P}\) we write \(\mathbf{x_p}\in\mathbb F_2^{1\times|P|}\) for the corresponding subset of codeword coordinates.
\end{definition}

\begin{definition}[Submatrices]
For a matrix \(G\in\mathbb F_2^{N\times N}\) and index set \(\mathbf{P}\subset[1:N]\) let
\(
G_{\mathbf{A,P}}\in\mathbb F_2^{|A|\times|P|},\quad\text{and} \; G_{\mathbf{F,P}}\in\mathbb F_2^{|F|\times|P|},
\)
denote the submatrices of \(G\) formed by restricting rows to \(\mathbf{P}\) public index and columns to the information index set and the frozen index set, respectively. Thus an adversary observing only public bits as: 
\begin{align}
\mathbf{x_p} = \mathbf{u_A} G_{\mathbf{A,P}} \oplus \mathbf{u_F} G_{\mathbf{F,P}}\in\mathbb F_2^{1\times|P|}.
\label{eq:xp}
\end{align}
We  abbreviate \(G_{\mathbf{A,P}}=G_{P,\mathrm{info}}\) and \(G_{\mathbf{F,P}}=G_{P,\mathrm{frozen}}\). We also assume that \(\mathbf{u_A}\in\mathbb F_2^{1\times|A|}, \mathbf{u_F}\in\mathbb F_2^{1\times|F|},
\)
\end{definition}

\begin{definition}[Image]
For a matrix \(M\in\mathbb F_2^{m\times n}\) the image is
\(
\mathrm{Im}(M) \;=\; \{\, \mathbf{v}M : \mathbf{v}\in\mathbb F_2^m \,\}\subseteq\mathbb F_2^n.
\)
If \(\operatorname{rank}(M)=r \leq n\) then \(|\mathrm{Im}(M)|=2^r\).
\end{definition}

\begin{definition}
Let \(W:\{0,1\}\to\mathcal{Y}\) be a binary-input memoryless channel and let \(G_N\) denote the length-\(N=2^n\) polar transform. Define the channel input \(x^N=u^N G_N\) and the product channel \(W^N(y^N\mid x^N)=\prod_{j=1}^N W(y_j\mid x_j)\). The \(i\)-th synthetic channel \(W_N^{(i)}\) is defined as
\(
W_N^{(i)}\big(y^N,u^{1:i-1}\mid u_i\big)
\triangleq\;
\frac{1}{2^{\,N-1}}
\sum_{u_{i+1}^N\in\{0,1\}^{N-i}}
W^N\big(y^N\mid x^N\big).
\)
Thus, \(W_N^{(i)}\) represents the effective channel seen by bit \(u_i\) when the
decoder has access to the channel output \(y^N\) and the previously decoded bits
\(u^{1:i-1}\), while the remaining bits \(u_{i+1}^N\) are averaged out and treated
as uniformly random.
\end{definition}

\begin{definition}
Let \(W_N^{(i)}\) denote the \(i\)-th synthetic channel of the length-\(N\) polar transform.  The Bhattacharyya parameter of \(W_N^{(i)}\) is
\(
Z\big(W_N^{(i)}\big)
\;\triangleq\;
\sum_{y\in\mathcal{Y}^N}\sqrt{\,W_N^{(i)}(y\mid 0)\;W_N^{(i)}(y\mid 1)\,}\,,
\)
which measures the reliability of the \(i\)-th synthetic channel (smaller \(Z\) means more reliable).
\end{definition}

\begin{definition} [Information and frozen sets]
Let $ \mathbf{A}\subseteq[1:N]$ denote the \emph{information set} and let
\(
 \mathbf{F} \;=\; [1:N]\setminus  \mathbf{A}
\)
denote the \emph{frozen set}.
The information set \( \mathbf{A}\) is formed from the synthetic-channel
reliabilities \(\{W_N^{(i)}\}_{i=1}^N\). The common selection rule is:
\(
 \mathbf{A} \;=\; \{\, i\in[1:N] : Z(W_N^{(i)}) \le \zeta_N \,\}
\)
where \(0 \le \zeta_N <1\) are design thresholds that vanish appropriately
as \(N\to\infty\). 
\end{definition}

\begin{remark} \label{rem:selection}
For a target rate \(R\) and blocklength \(N\) we choose the information set \(|\mathbf{A}|=\lfloor N R\rfloor\) (using the selection rule described above). The frozen set is \(\mathbf{F}=[1:N]\setminus\mathbf{A}\). These sets \(\mathbf{A}\) and \(\mathbf{F}\) are fixed for the remainder of the paper.
\end{remark}

\subsection{Motivation: Polar coding with non-identical BEC uses and time-shared physical channels}
Consider the simplest nontrivial polar encoder with blocklength \(N=2\) and generator
\(
G_2\),
\(
\mathbf{u}=(u_F,u_A),
\)
where \(u_A\) is an information bit and \(u_F\) is a frozen bit. The encoded codeword is
\(
\mathbf{x}=\mathbf{u}G_2=(x_1,x_2)=(u_A\oplus u_F,\;u_A).
\)
Suppose the public symbol is \(x_1=u_A\oplus u_F\). If the frozen bit \(u_F\) is drawn uniformly at random (i.i.d.\ \(\mathrm{Bernoulli}(1/2)\)) and kept secret between sender and legitimate receiver, then \(\Pr(u_A\mid x_1)=\tfrac12\), so \(x_1\) alone reveals no information about \(u_A\). In this toy example the pair \((x_1,u_F)\) functions as a one-time pad: the legitimate receiver recovers \(u_A=x_1\oplus u_F\), while an adversary observing only \(x_1\) learns nothing about \(u_A\). This observation motivates constructing an information-theoretic symmetric cryptosystem based on using frozen bits as encryption keys in practice, but only so long as the masks are fresh and never reused across messages. Conversely, publishing the coordinate that directly carries the information (here \(x_2=u_A\)) or any symbol that leaks the mask would trivially compromise secrecy, illustrating that which coordinate is made public matters even in the smallest code. Hence, in the finite-length regime we require a concrete, efficiently computable leakage measure that decides which indices are safe to expose; this need motivates a practical bit selection and computing the amount of the leakage, which we will talk about it later. 

We consider polar coding when different channel uses experience different erasure probabilities and the physical link is time-shared between a \emph{public} and a \emph{private} channel. Concretely, let \(N=2^n\) be a polar blocklength and for \(i=1,\dots,N\) let \(\delta_i\in[0,1]\) denote the erasure probability of the channel used for the \(i\)-th transmitted bit (i.e., the \(i\)-th physical use is a \(\mathrm{BEC}(\delta_i)\)). Collect these into the vector \(\boldsymbol{\delta}=(\delta_1,\dots,\delta_N)\). We assume two physical channels with fixed qualities \(\delta_{\mathrm{pub}}\) (public) and \(\delta_{\mathrm{priv}}\) (private), and model scheduling/time-sharing by assigning each position \(i\) to one of these channels so that \(\delta_i\in\{\delta_{\mathrm{pub}},\delta_{\mathrm{priv}}\}\). Time can be partitioned into intervals, \(T_1,T_2,\dots\), and a transmitter is mixing uses of the public and private links across intervals; in other words, time-sharing is natural and trivial to implement in practice. The toy \(N=2\) example illustrates this operationally: one can send \(x_1=u_A\oplus u_F\) over the public link while sending \(x_2=u_A\) over the private link, so that the public exposure is masked by the frozen bit, but the choice of which coordinate is public matters and must be analyzed carefully.

The leakage identity holds for any binary linear code, but we focus on polar codes because their frozen-bit structure naturally provides masking randomness and their recursive Kronecker form supports efficient construction and certification. We interpret the frozen bits as fresh random masks, so the scheme can be viewed as a polar-based coset code rather than standard frozen-to-zero polar coding. Unlike a one-time pad, our goal is not direct encryption, but exact finite-length leakage control when selected codeword coordinates are intentionally exposed on a public resource. The BEC model serves only as a motivating operational example for public/private scheduling and does not affect the rank-based leakage identity.


\section{RankGuard-Polar: sending part of the codeword publicly} \label{sec:RankGuard}

\subsection{Assumption for RankGuard-Polar }

\textbf{\emph{RankGuard--Polar} encoder:} The RankGuard--polar construction addresses the setting in which the eavesdropper is assumed to have access to the channel input, reflecting the public nature of the channel, and hence must be secured even under this strong eavesdropper assumption.

\textbf{Information and Frozen indices:} As stated in Remark~\ref{rem:selection}, once the information and frozen sets \(\mathbf{A}\) and \(\mathbf{F}\) are selected for a target rate \(R\), they are held fixed for the remainder of the paper. In particular, for the \emph{RankGuard--Polar} construction considered here we employ the same fixed sets \(\mathbf{A}\) and \(\mathbf{F}\) across all channel models studied. We do \emph{not} re-select or adapt \(\mathbf{A}\) and \(\mathbf{F}\) separately for different channels. This choice isolates the effect of the RankGuard mechanism and simplifies the ensuing rank and secrecy analyses.

\textbf{\emph{RankGuard--Polar} decoder:} We also assume the decoder for the \emph{RankGuard--Polar} construction is identical to the standard polar-code decoder, the successive-cancellation (SC) decoder of \cite{Arikan2009}.

Sending raw codeword symbols through a public channel is, in general, incompatible with privacy preservation: an adversary who observes those symbols may learn information about the message. At the same time, many practical systems possess public resources (broadcast links, side-channels, cloud storage) that can be exploited to increase throughput or reliability, if used carefully.

We adopt the following model. Let \(N=2^n\) and let \(G_N\) be Arıkan's polar transform. The encoder forms \(\mathbf{x} = \mathbf{u} G_N\), where \(\mathbf{u}=(\mathbf{u_A},\mathbf{u_F})\) partitions the vector of bits into information bits \(\mathbf{u_A}\) (indices in the information set \(\mathbf{A}\)) and frozen bits \(\mathbf{u_F}\) (indices in the frozen set \(\mathbf{F}\)). We assume the frozen bits \(\mathbf{u_F}\) are drawn i.i.d.\ \(\mathrm{Bernoulli}(1/2)\) and are known to the legitimate receiver but not to external adversaries. This is a crucial assumption: frozen bits must act as a shared secret key between encoder and decoder. The channel inputs \(x_1,\dots,x_N\) are produced by the standard linear mapping \(\mathbf{x}=\mathbf{u}G_N\).
Before relating leakage to mutual information, it is helpful to view the linear maps separately: We propose to send a carefully chosen subset of coordinates \(\mathbf{P}\subset[1:N]\) over a public channel (the \emph{public subchannel}), and send the remaining coordinates over private channels. The goal is that the public outputs \(\mathbf{x_p}\) are (almost) independent of the message bits \(\mathbf{u_A}\), i.e., they reveal negligible information about \(\mathbf{u_A}\).

\subsection{Theorems and Examples for RankGuard-Polar}

\begin{lemma} \label{lem:cond_entropy}
Let \(\mathbf{u}=(\mathbf{u_A},\mathbf{u_F})\) with \(\mathbf{u_F}\) uniform on \(\mathbb F_2^{|F|}\) and independent of \(\mathbf{u_A}\).
For any published index set \(\mathbf{P}\) and \(\mathbf{x_p} = \mathbf{u_A}\, G_{\mathbf{A,P}} \oplus \mathbf{u_F} \, G_{\mathbf{F,P}}.
\) we have
\begin{align}
H(\mathbf{x_p}\mid \mathbf{u_A})=\operatorname{rank}\big(G_{\mathbf{F,P}}\big).
\end{align}
\end{lemma}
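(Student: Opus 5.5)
The plan is to condition on a fixed value of the information bits and compute the conditional distribution of \(\mathbf{x_p}\) exactly. Fix any realization \(\mathbf{u_A}=\mathbf{a}\in\mathbb F_2^{1\times|A|}\). Because \(\mathbf{u_F}\) is independent of \(\mathbf{u_A}\), conditioning on \(\mathbf{u_A}=\mathbf{a}\) leaves \(\mathbf{u_F}\) uniform on \(\mathbb F_2^{1\times|F|}\). Hence, by \eqref{eq:xp}, conditionally on \(\mathbf{u_A}=\mathbf{a}\) we have \(\mathbf{x_p}=\mathbf{c}\oplus \mathbf{u_F}G_{\mathbf{F,P}}\), where \(\mathbf{c}=\mathbf{a}G_{\mathbf{A,P}}\) is a fixed vector. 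Since translation by \(\mathbf{c}\) is a bijection of \(\mathbb F_2^{1\times|P|}\), it does not change entropy. Therefore \(H(\mathbf{x_p}\mid \mathbf{u_A}=\mathbf{a})=H(\mathbf{u_F}G_{\mathbf{F,P}})\), and this quantity does not depend on \(\mathbf{a}\).

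The key step is to show that \(\mathbf{v}=\mathbf{u_F}G_{\mathbf{F,P}}\) is uniform on \(\mathrm{Im}(G_{\mathbf{F,P}})\). The map \(\phi:\mathbf{f}\mapsto \mathbf{f}G_{\mathbf{F,P}}\) is an \(\mathbb F_2\)-linear surjection from \(\mathbb F_2^{1\times|F|}\) onto \(\mathrm{Im}(G_{\mathbf{F,P}})\). For any \(\mathbf{w}\) in the image, the fiber \(\phi^{-1}(\mathbf{w})\) is a coset \(\mathbf{f}_0+\ker\phi\). Every fiber therefore has the same cardinality \(|\ker\phi|=2^{|F|-r}\), where \(r=\operatorname{rank}(G_{\mathbf{F,P}})\) by rank--nullity. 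Since \(\mathbf{u_F}\) is uniform, \(\Pr(\mathbf{v}=\mathbf{w})=2^{|F|-r}/2^{|F|}=2^{-r}\) for each \(\mathbf{w}\in\mathrm{Im}(G_{\mathbf{F,P}})\). This matches the count \(|\mathrm{Im}(G_{\mathbf{F,P}})|=2^r\) from the Definition of the image, so \(H(\mathbf{v})=\log_2 2^r=r\).

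Averaging over \(\mathbf{a}\), \(H(\mathbf{x_p}\mid\mathbf{u_A})=\sum_{\mathbf{a}}\Pr(\mathbf{u_A}=\mathbf{a})\,r=\operatorname{rank}(G_{\mathbf{F,P}})\). This holds for any distribution of \(\mathbf{u_A}\); only the independence and the uniformity of \(\mathbf{u_F}\) are used.

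The main point needing care is the equal-fiber-size argument, i.e., that a uniform input pushed through a linear map gives a uniform output on the image. This is exactly where the assumption that \(\mathbf{u_F}\) is uniform (not merely full-support) enters. I also need to keep the paper's row/column conventions consistent: \(G_{\mathbf{F,P}}\) acts on row vectors from the right, so its rank is the dimension of its row-span image. Everything else is routine.
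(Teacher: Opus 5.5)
Your proof is correct and follows the same route as the paper. You condition on \(\mathbf{u_A}\), remove the deterministic shift \(\mathbf{u_A}G_{\mathbf{A,P}}\), use independence to drop the conditioning, and observe that a uniform \(\mathbf{u_F}\) pushed through \(G_{\mathbf{F,P}}\) is uniform on an image of size \(2^{r}\); your coset and rank--nullity fiber count just makes explicit the pushforward-uniformity step that the paper asserts without proof.
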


\textbf{Proof:}
Fix \(\mathbf{u_A}\) and set \(\mathbf{A}\triangleq \mathbf{u_A}G_{\mathbf{A,P}}\). Then
\(
H(\mathbf{x_p}\mid \mathbf{u_A})
=H\big(a\oplus \mathbf{u_F}G_{\mathbf{F,P}}\mid \mathbf{u_A}\big)
=H\big(\mathbf{u_F}G_{\mathbf{F,P}}\mid \mathbf{u_A}\big),
\)
since adding the deterministic vector \(\mathbf{A}\) does not change entropy. As \(\mathbf{u_F}\) is independent of \(\mathbf{u_A}\) and uniform on \(\mathbb F_2^{|F|}\),
\begin{align}
H\big(\mathbf{u_F}G_{\mathbf{F,P}}\mid \mathbf{u_A}\big)=H\big(\mathbf{u_F}G_{\mathbf{F,P}}\big).
\end{align}
The linear map \(G_{\mathbf{F,P}}:\mathbb F_2^{|F|}\to\mathbb F_2^{|P|}\) has image \(\mathbf{V}\) with \(|V|=2^{r}\), where \(r=\operatorname{rank}(G_{\mathbf{F,P}})\); the pushforward of the uniform distribution on \(\mathbb F_2^{|F|}\) is uniform on \(\mathbf{V}\). Hence
\(
H\big(G_{\mathbf{F,P}}\mathbf{u_F}\big)=\log_2|V|=\log_2 2^{r}=r,
\)
so \(H(\mathbf{x_p}\mid \mathbf{u_A})=\operatorname{rank}(G_{\mathbf{F,P}})\), as required.
$\Box$

\begin{theorem}[Leakage certificate]
Let \(\mathbf{A,F}\) be disjoint index sets, \(\mathbf{A} \bigcup \mathbf{F}=[1:N]\), and let \(\mathbf{P}\subset[1:N]\). Define:
\begin{align}
\mathbf{u} \triangleq  \begin{bmatrix}\mathbf{u_A} & \mathbf{u_F}\end{bmatrix}
\in\mathbb F_2^{1\times(|A|+|F|)},\qquad
\mathbf{x_p}=\mathbf{u}\,G_{\mathbf{P}},
\end{align}
\begin{align}
G_{\mathbf{P}}\triangleq \begin{bmatrix} G_{\mathbf{A,P}} \\[4pt] G_{\mathbf{F,P}} \end{bmatrix}
\in\mathbb F_2^{(|A|+|F|)\times |P|},
\end{align}

Since \(\mathbf{u_F}\) is independent of \(\mathbf{u_A}\) and uniform on \(\mathbb F_2^{|F|}\) the following bounds hold:
\(
H(\mathbf{x_p})=\operatorname{rank}(G_{\mathbf{P}})
\),
\(H(\mathbf{x_p}\mid \mathbf{u_A})=\operatorname{rank}\big(G_{\mathbf{F,P}}\big),
\)
and therefore the leakage certificate \(L(\mathbf{P})\) is:
\begin{align}
L(\mathbf{P})\triangleq I(\mathbf{u_A};\mathbf{x_p})=
\operatorname{rank}\big(G_{\mathbf{P}}\big)-\operatorname{rank}\big(G_{\mathbf{F,P}}\big).
\label{eq:LP}
\end{align}
\end{theorem}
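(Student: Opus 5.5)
The plan is to write the leakage as $I(\mathbf{u_A};\mathbf{x_p}) = H(\mathbf{x_p}) - H(\mathbf{x_p}\mid \mathbf{u_A})$ and evaluate the two entropies separately. The second term is exactly Lemma~\ref{lem:cond_entropy}: since $\mathbf{u_F}$ is uniform on $\mathbb F_2^{|F|}$ and independent of $\mathbf{u_A}$, we get $H(\mathbf{x_p}\mid \mathbf{u_A})=\operatorname{rank}(G_{\mathbf{F,P}})$, and nothing further is needed there. The whole argument therefore reduces to establishing $H(\mathbf{x_p})=\operatorname{rank}(G_{\mathbf{P}})$ and then subtracting, which gives \eqref{eq:LP}.

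For the unconditional entropy I would reuse the pushforward argument from the proof of Lemma~\ref{lem:cond_entropy}, now applied to the stacked matrix $G_{\mathbf{P}}$. The map $\mathbf{u}\mapsto \mathbf{u}G_{\mathbf{P}}$ is $\mathbb F_2$-linear from $\mathbb F_2^{|A|+|F|}$ onto $\mathrm{Im}(G_{\mathbf{P}})$, and $|\mathrm{Im}(G_{\mathbf{P}})|=2^{\operatorname{rank}(G_{\mathbf{P}})}$. If $\mathbf{u}$ is uniform, every fibre is a coset of the kernel, so all fibres have equal size $2^{|A|+|F|-\operatorname{rank}(G_{\mathbf{P}})}$. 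Hence $\mathbf{x_p}$ is uniform on the image, and $H(\mathbf{x_p})=\operatorname{rank}(G_{\mathbf{P}})$.

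The main obstacle is that this step needs the whole vector $\mathbf{u}=(\mathbf{u_A},\mathbf{u_F})$ to be uniform. The hypotheses only make $\mathbf{u_F}$ uniform and independent of $\mathbf{u_A}$, so I would make explicit the standard assumption that the message $\mathbf{u_A}$ is uniform on $\mathbb F_2^{|A|}$. This is the convention under which ``leakage in bits'' is usually measured, and it makes $\mathbf{u}$ uniform as a product of independent uniforms.

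Without that assumption the same argument still gives useful information, which I would record as a remark:
\begin{align}
H(\mathbf{x_p})\le \log_2|\mathrm{Im}(G_{\mathbf{P}})| = \operatorname{rank}(G_{\mathbf{P}}).
\end{align}
It follows that $\operatorname{rank}(G_{\mathbf{P}})-\operatorname{rank}(G_{\mathbf{F,P}})$ is an upper bound on the leakage for an arbitrary message distribution, with equality for a uniform message. This fits the statement's phrasing ``the following bounds hold''. A useful sanity check is that the difference is nonnegative, because the rows of $G_{\mathbf{F,P}}$ are a subset of the rows of $G_{\mathbf{P}}$. It is also at most $\operatorname{rank}(G_{\mathbf{A,P}})\le |A|$, by subadditivity of rank over the row blocks. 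This matches $0\le I(\mathbf{u_A};\mathbf{x_p})\le H(\mathbf{u_A})$.
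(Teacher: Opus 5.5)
Your proposal is correct and follows the paper's proof. The paper also obtains $H(\mathbf{x_p})=\operatorname{rank}(G_{\mathbf{P}})$ by pushing the uniform distribution of $\mathbf{u}$ forward through $G_{\mathbf{P}}$, and takes $H(\mathbf{x_p}\mid\mathbf{u_A})=\operatorname{rank}(G_{\mathbf{F,P}})$ from Lemma~\ref{lem:cond_entropy}. The paper simply asserts that $\mathbf{u}$ is uniform on $\mathbb F_2^{|A|+|F|}$, so your explicit assumption that $\mathbf{u_A}$ is uniform makes precise a hypothesis the paper uses tacitly, and your upper-bound remark for general message distributions is a valid small extension.
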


\textbf{Proof:}
 Since \(\mathbf{u}\) is uniform on \(\mathbb F_2^{|A|+|F|}\), equivalently \(\mathbf{u_A}\) and \(\mathbf{u_F}\) are independent and each uniform. The condition that \(\mathbf{u}\) is uniform on \(\mathbb F_2^{|A|+|F|}\) implies that \( \mathbf{x_p}=\mathbf{u}G_{\mathbf{P}}\) is the pushforward of a uniform distribution and therefore is uniform on \(\mathrm{Im}(G_{\mathbf{P}})\), whence \(H(\mathbf{x_p})=\log_2|\mathrm{Im}(G_{\mathbf{P}})|=\operatorname{rank}(G_{\mathbf{P}})\). Also, the conditional equality
\(H(\mathbf{x_p}\mid \mathbf{u_A})=\operatorname{rank}(G_{\mathbf{F,P}})\) is exactly the content of Lemma~\ref{lem:cond_entropy}. $\Box$

The $L(\mathbf{P})$ equals the number of independent linear combinations of the information vector $\mathbf{u_A}$ that an adversary can recover from the public observation $\mathbf{x_p}$. In particular, $L(\mathbf{P})=0$ implies no information about $\mathbf{u_A}$ is revealed, which means that \(\operatorname{rank}\big(G_{\mathbf{P}}\big)=\operatorname{rank}\big(G_{\mathbf{F,P}}\big)\) and all columns of $G_{\mathbf{A,P}}$ lie in the column space of $G_{\mathbf{F,P}}$, while $L(\mathbf{P})=r>0$ means up to $r$ independent bits of $\mathbf{u_A}$ are exposed.

\textbf{Example 1:} Consider the \(N=4\) polarized codeword whose coordinates satisfy:
\begin{align}
x_1 = u_4 \oplus u_3 \oplus u_2 \oplus u_1, \quad x_2 = u_4 \oplus u_2,\\
x_3 = u_4 \oplus u_3, \quad x_4 = u_4,
\label{eq:N=4}
\end{align}
where \(u_4,u_3,u_2,u_1\in\mathbb F_2\).  With \(\mathbf{u}=[u_1\;u_2\;u_3\;u_4]\) we may write \(\mathbf{x}=\mathbf{u}G=[x_1\;x_2\;x_3\;x_4]\) with
\begin{align}
G \;=\;
\begin{bmatrix}
1 & 0 & 0 & 0\\[4pt]
1 & 1 & 0 & 0\\[4pt]
1 & 0 & 1 & 0\\[4pt]
1 & 1 & 1 & 1
\end{bmatrix}.
\end{align}
\textbf{Scenario 1:} information set is \(\mathbf{A}=\{4\}\), means \(u_4\), and the frozen set is \(\mathbf{F}=\{1,2,3\}\) means \(\{u_3,u_2,u_1\}\).
\begin{itemize}
  \item Now, let us make \(x_4\) public, so we have \(\mathbf{P}=\{4\}\). Now, publishing \(x_4\) reveals exactly one bit of \(\mathbf{u_A}\), so the adversary recovers \(u_4\) completely. The mapping matrices are
\(
G_{\mathbf{A,P}}=[\,1\,],\qquad G_{\mathbf{F,P}}=[\,0\;0\;0\,]^T.
\)
Hence \(\operatorname{rank}(G_{\mathbf{P}})=1\), \(\operatorname{rank}(G_{\mathbf{F,P}})=0\) and
\(
L(\{4\})=1-0=1.
\)
  \item Let us make \(x_1\) public, so we have \(\mathbf{P}=\{1\}\). The single published symbol \(x_1=u_4 \oplus u_3 \oplus u_2 \oplus u_1\) carries no information about \(u_4\).  The mapping matrices are
\(
G_{\mathbf{P}}=[\,1\,1\;1\;1\,]^T,\qquad G_{\mathbf{F,P}}=[\,1\;1\;1\,]^T.
\)
Here \(\operatorname{rank}(G_{\mathbf{P}})=1\) and \(\operatorname{rank}(G_{\mathbf{F,P}})=1\), so
\(
L(\{1\})=1-1=0.
\)
Hence \(x_1\) alone is masked by \(u_3,u_2,u_1\) and reveals no information about \(u_4\).
\end{itemize}

\textbf{Scenario 2:} information set is \(\mathbf{A}=\{4,3,2\}\), , and the frozen set is \(\mathbf{F}=\{1\}\).
\begin{itemize}
  \item If \(\mathbf{P}=\{1\}\),  then
\(
G_{\mathbf{P}}=[\;1\;1\;1\;1\,]^T,\qquad G_{\mathbf{F,P}}=[\,1\,],
\)
and \(\operatorname{rank}(G_{\mathbf{P}})=1\), \(\operatorname{rank}(G_{\mathbf{F,P}})=1\), so
\(
L(\{1\})=1-1=0.
\)
Thus \(x_1\) alone does not leak information about \((u_4,u_3,u_2)\).
\(x_1\) is masked by \(u_1\) and alone reveals no information about the triple \((u_4,u_3,u_2)\).

\item If \(\mathbf{P}=\{1,2,3\}\), then the mapping submatrices are
\begin{align}
G_{\mathbf{P}}
=\begin{bmatrix}1&0&0\\[2pt]1&1&0\\[2pt]1&0&1\\[2pt]1&1&1\end{bmatrix},\qquad
G_{\mathbf{F,P}}=[\;1\;0\;0\,].
\end{align}
Row-reduction yields \(\operatorname{rank}(G_{\mathbf{P}})=3\) and \(\operatorname{rank}(G_{\mathbf{F,P}})=1\), hence
\(
L(\{1,2,3\}) = 3-1 = 2.
\)
Thus, up to two independent linear relations on the information triple \((u_4,u_3,u_2)\) can be recovered by observing \((x_1,x_2,x_3)\). 
In fact, by having \((x_1,x_2,x_3)\) being public, although \(x_1\) does not leak anything about \(u_4,u_3,u_2\), since \(u_1\) is masked the leakage, \( x_2 = u_4\oplus u_2\), \(x_3 = u_4\oplus u_3,\) gives two equations that helps attacker to learn the difference of the information.
\end{itemize}

\begin{theorem}\label{thm:leak-matrix-R}
Fix disjoint index sets $\mathbf{A,F}$, \(\mathbf{A \bigcup F}=[1:N]\), and let $\mathbf{P}\subset[1:N]$. Assume \eqref{eq:xp}, let
\(\mathbf{Q} \triangleq  \operatorname{rowspace}(G_{\mathbf{P}})\subseteq\mathbb F_2^{1\times|P|}\),
\(\mathbf{V} \triangleq  \operatorname{rowspace}(G_{\mathbf{F,P}})\subseteq \mathbf{Q},\)
and define
\(
r \triangleq  \dim \mathbf{Q} - \dim \mathbf{V}
= \operatorname{rank}(G_{\mathbf{P}})-\operatorname{rank}(G_{\mathbf{F,P}}).
\)
Then there exists a matrix $R\in\mathbb F_2^{|P|\times r}$ such that
\begin{align}
G_{\mathbf{F,P}}\,R = 0 \;\text{and}\; \operatorname{rank}(G_{\mathbf{A,P}}\,R)=r>0.
\end{align}
Consequently the $r$ columns of $R$ yield $r$ independent linear equations
in $\mathbf{u_A}$ obtained by right-multiplying the observation:
\begin{align}
\mathbf{x_p}\,R \;=\; \mathbf{u_A}\,(G_{\mathbf{A,P}}R) \in\mathbb F_2^{1\times r}.
\end{align}
\end{theorem}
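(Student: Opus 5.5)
The plan is a purely linear-algebraic argument about the column spaces of the stacked matrix \(G_{\mathbf{P}}\). We regard each of \(G_{\mathbf{P}}\), \(G_{\mathbf{A,P}}\), \(G_{\mathbf{F,P}}\) as a linear map acting on column vectors \(\mathbf{w}\in\mathbb F_2^{|P|\times 1}\) by right multiplication. The two kernels are
\[
\mathbf{K}_F\triangleq\ker(G_{\mathbf{F,P}})=\{\mathbf{w}:G_{\mathbf{F,P}}\mathbf{w}=0\},\qquad
\mathbf{K}\triangleq\ker(G_{\mathbf{P}})=\ker(G_{\mathbf{A,P}})\cap\mathbf{K}_F .
\]
By rank--nullity, \(\dim\mathbf{K}_F=|P|-\operatorname{rank}(G_{\mathbf{F,P}})\) and \(\dim\mathbf{K}=|P|-\operatorname{rank}(G_{\mathbf{P}})\). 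Since \(\mathbf{K}\subseteq\mathbf{K}_F\), we get \(\dim\mathbf{K}_F-\dim\mathbf{K}=r\). Here we use that row rank equals column rank, so \(\dim\mathbf{Q}=\operatorname{rank}(G_{\mathbf{P}})\) and \(\dim\mathbf{V}=\operatorname{rank}(G_{\mathbf{F,P}})\) as in the statement.

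\textbf{Construction of \(R\).} Choose a basis of \(\mathbf{K}\) and extend it to a basis of \(\mathbf{K}_F\). Let \(\mathbf{w}_1,\dots,\mathbf{w}_r\) be the \(r\) added vectors, and set \(R=[\mathbf{w}_1\;\cdots\;\mathbf{w}_r]\in\mathbb F_2^{|P|\times r}\). Then \(G_{\mathbf{F,P}}R=0\) holds by construction.

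\textbf{Rank of \(G_{\mathbf{A,P}}R\).} This is the main step. Suppose \(G_{\mathbf{A,P}}R\mathbf{c}=0\) for some \(\mathbf{c}\in\mathbb F_2^{r}\). The vector \(R\mathbf{c}\) lies in \(\mathbf{K}_F\) and is also killed by \(G_{\mathbf{A,P}}\), so \(R\mathbf{c}\in\mathbf{K}\). But \(R\mathbf{c}\) is a combination of the complement vectors \(\mathbf{w}_i\), which are independent modulo \(\mathbf{K}\). Hence \(\mathbf{c}=0\), the columns of \(G_{\mathbf{A,P}}R\) are linearly independent, and \(\operatorname{rank}(G_{\mathbf{A,P}}R)=r\). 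Positivity \(r>0\) should be read as holding whenever the leakage certificate \eqref{eq:LP} is nonzero. If \(r=0\), \(R\) is the empty matrix and the statement is vacuous; I would note this explicitly.

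\textbf{Consequence.} From \eqref{eq:xp},
\[
\mathbf{x_p}R=\mathbf{u_A}G_{\mathbf{A,P}}R\oplus\mathbf{u_F}G_{\mathbf{F,P}}R=\mathbf{u_A}(G_{\mathbf{A,P}}R).
\]
Because \(G_{\mathbf{A,P}}R\) has full column rank \(r\), the \(r\) coordinates of \(\mathbf{x_p}R\) are \(r\) linearly independent linear functionals of \(\mathbf{u_A}\). This matches \(L(\mathbf{P})=r\) from the Leakage certificate theorem.

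The only delicate points are bookkeeping. One is the orientation convention: the paper's submatrices are \(|A|\times|P|\) and the observation is a row vector multiplied on the right. The other is checking that the rowspace dimensions defining \(r\) equal the ranks used in the kernel computation, which is immediate once column and row rank are identified.
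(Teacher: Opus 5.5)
Your argument is correct. It is the dual of the paper's construction rather than the same one.

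\textbf{Paper's route.} It works in row space. It extends a basis of $\mathbf{V}$ to a basis of $\mathbf{Q}$, defines $\phi$ on $\mathbf{Q}$ vanishing on $\mathbf{V}$ with $\phi(\bar q_\ell)=e_\ell^\top$, extends $\phi$ arbitrarily to $\mathbb F_2^{1\times|P|}$, and takes $R$ to be the matrix of that extension.

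\textbf{Your route.} You work in column space with the nested null spaces $\ker G_{\mathbf{P}}\subseteq\ker G_{\mathbf{F,P}}$. You take the columns of $R$ to be a complement of the smaller space inside the larger one.

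\textbf{Why they agree.} Under the pairing $(q,w)\mapsto qw$ one has $\ker G_{\mathbf{F,P}}=\mathbf{V}^\perp$, $\ker G_{\mathbf{P}}=\mathbf{Q}^\perp$, and $\mathbf{V}^\perp/\mathbf{Q}^\perp\cong(\mathbf{Q}/\mathbf{V})^*$. So both constructions produce the same family of admissible $R$: columns in $\ker G_{\mathbf{F,P}}$ that are independent modulo $\ker G_{\mathbf{P}}$.

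\textbf{What each buys.} Your version is more elementary and directly algorithmic: it needs only two null-space computations by Gaussian elimination. Its rank step is also a tight injectivity argument. By contrast, the paper's claim that the rowspace of $G_{\mathbf{A,P}}R$ equals $\phi(\mathbf{Q})$ silently uses $\mathbf{Q}=\operatorname{rowspace}(G_{\mathbf{A,P}})+\mathbf{V}$ together with $\phi(\mathbf{V})=0$. The paper's version buys interpretation. Since $\mathbf{x_p}\in\mathbf{Q}$, $\mathbf{x_p}R=\phi(\mathbf{x_p})$ is the coordinate vector of the coset $\mathbf{x_p}+\mathbf{V}$ in the chosen basis of $\mathbf{Q}/\mathbf{V}$. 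That is precisely the part of the observation the frozen rows fail to mask.

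Your caveat about $r>0$ is also more accurate than the statement itself. The hypotheses do not force $r>0$: for example, $\mathbf{P}=\{1\}$ in Example~1 gives $r=0$. The paper's proof does not address this.
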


\textbf{Proof:}
Because \(\mathbf{V}\subseteq \mathbf{Q}\) and \(\dim(\mathbf{Q}/V)=r\), there exist vectors
\(\bar q_1,\dots,\bar q_r\in \mathbf{Q}\) whose cosets \(\bar q_1+\mathbf{V},\dots,\bar q_r+\mathbf{V}\) form a basis of the quotient space \(\mathbf{Q}/\mathbf{V}\).
Equivalently, if \(v_1,\dots,v_{d_V}\) is a basis of \(\mathbf{V}\), then
\(
\{v_1,\dots,v_{d_V},\bar q_1,\dots,\bar q_r\}
\)
is a basis of \(\mathbf{Q}\). Define a linear map \(\phi:\mathbf{Q}\to\mathbb F_2^{1\times r}\) on this basis by
\(
\phi(v_j)=0\quad(j=[1:d_V]),\,\,
\phi(\bar q_\ell)=e_\ell^\top\quad(\ell=[1;r]),
\)
where \(e_\ell^\top\) is the \(\ell\)-th standard row vector in \(\mathbb F_2^{1\times r}\).
By linearity this uniquely determines \(\phi\) on all of \(\mathbf{Q}\).  Because \(\phi\) sends the
\(\bar q_\ell\) to the standard basis of \(\mathbb F_2^{1\times r}\), the image \(\phi(\mathbf{Q})\)
contains the \(r\) independent rows \(e_1^\top,\dots,e_r^\top\); therefore
\(\dim \phi(\mathbf{Q})=r\), i.e. \(\operatorname{rank}(\phi)=r\), \cite{AxlerLADR}.

Next extend \(\phi\) arbitrarily to a linear map \(\widetilde \phi:\mathbb F_2^{1\times|P|}\to\mathbb F_2^{1\times r}\).
Any linear map from row-vectors of length \(|\mathbf{P}|\) to row-vectors of length \(r\) is realized
by right-multiplication with a matrix \(R\in\mathbb F_2^{|P|\times r}\); that is,
\(\widetilde \phi(y)=yR\) for all \(y\in\mathbb F_2^{1\times|P|}\).  Because \(\phi\) vanishes on \(\mathbf{V}\),
the extension \(\widetilde \phi\) also vanishes on \(\mathbf{V}\), hence every row of \(G_{\mathbf{F,P}}\) is
mapped to zero and we have \(G_{\mathbf{F,P}}R=0\).

Now observe that every row of \(G_{\mathbf{A,P}}\) lies in \(\mathbf{Q}\). Applying \(\widetilde \phi\) row-wise
to \(G_{\mathbf{A,P}}\) is therefore the same as applying \(\phi\) to the rowspace \(\mathbf{Q}\); in matrix form
this is exactly the product \(G_{\mathbf{A,P}}R\). Consequently the rowspace of \(G_{\mathbf{A,P}}R\) equals
\(\phi(\mathbf{Q})\), which has dimension \(r\). Hence
\(
\operatorname{rank}(G_{\mathbf{A,P}}R)=\dim \phi(\mathbf{Q})=r.
\)
Finally, right-multiplying the public observation \(\mathbf{x_p}=\mathbf{u_A}G_{\mathbf{A,P}}\oplus\mathbf{u_F}G_{\mathbf{F,P}}\)
by \(R\) gives
\(
\mathbf{x_p}R=\mathbf{u_A}(G_{\mathbf{A,P}}R)\oplus\mathbf{u_F}(G_{\mathbf{F,P}}R)=\mathbf{u_A}(G_{\mathbf{A,P}}R),
\)
since \(G_{\mathbf{F,P}}R=0\). The \(r\) columns of \(R\) therefore yield \(r\) independent linear
equations in \(\mathbf{u_A}\), completing the proof. \(\Box\)

Now, let's get back to our example with two scenarios:

\textbf{Example 2:}
Consider $N=4$, and scenarios of Example 1.

\textbf{Scenario 1.} $\mathbf{A}=\{u_4\}$ and $\mathbf{F}=\{u_3,u_2,u_1\}$.

\begin{itemize}
\item $\mathbf{P}=\{4\}$ (publish $x_4$). Since \(L(\mathbf{P})=1\),
choose
\(
R=\begin{bmatrix}1\end{bmatrix}\in\mathbb F_2^{1\times 1},
\)
then
\(
G_{\mathbf{F,P}}\,R=0,
G_{\mathbf{A,P}}\,R=\begin{bmatrix}1\end{bmatrix}\begin{bmatrix}1\end{bmatrix}=\begin{bmatrix}1\end{bmatrix},
\)
and $\operatorname{rank}(G_{\mathbf{A,P}}R)=1$. The single leaked equation is
\(x_4 = u_4. \)

\item $\mathbf{P}=\{1\}$ (publish $x_1$), since \(L(\mathbf{P})=0\), hence no nontrivial $R\in\mathbb F_2^{1\times 0}$ exists and $x_1$ alone reveals no information about $u_4$.
\end{itemize}

\textbf{Scenario 2.} $\mathbf{A}=\{u_4,u_3,u_2\}$ and $\mathbf{F}=\{u_1\}$.

\begin{itemize}
\item $\mathbf{P}=\{1\}$ (publish $x_1$). Since \(L(\mathbf{P})=0\), thus, $x_1$ alone does not leak information about $(u_4,u_3,u_2)$ and no nontrivial $R$ exists.

\item $\mathbf{P}=\{1,2,3\}$ (publish $x_1,x_2,x_3$). Since \(L(\mathbf{P})=2\), one convenient choice is
\begin{align}
R =
\begin{bmatrix}
0 & 0 \\[2pt]
1 & 0 \\[2pt]
0 & 1
\end{bmatrix}\in\mathbb F_2^{3\times 2}.
\end{align}
Check:
\(
G_{\mathbf{F,P}}\,R = [0\;0],
\)
and
\begin{align}
G_{\mathbf{A,P}}\,R =
\begin{bmatrix}
1 & 1\\[2pt]
0 & 1\\[2pt]
1 & 0
\end{bmatrix}\in\mathbb F_2^{3\times 2},
\end{align}
which has rank $2$. Right-multiplying the observation yields
\(
\mathbf{x_p} R = \mathbf{u_A} (G_{\mathbf{A,P}} R).
\)
With $\mathbf{u_A}=[\,u_4\; u_3\; u_2\,]$ and $\mathbf{x_p}=[\,x_1\; x_2\; x_3\,]$ we get
\(
\mathbf{x_p} R = \begin{bmatrix}x_2 & x_3\end{bmatrix},\)
\(\mathbf{u_A} (G_{\mathbf{A,P}}R) = \begin{bmatrix}u_4\oplus u_2 & u_4\oplus u_3\end{bmatrix},
\)
so the two independent leaked equations are
\(
x_2 = u_4 \oplus u_2, x_3 = u_4 \oplus u_3.
\)
Any other $R$ whose first row is zero and for which $G_{\mathbf{A,P}}R$ has rank $2$ is equally valid.
\end{itemize}

\section{Score Greedy Algorithm} \label{sec:Algorithms}
Computing the exact leakage \(L(\mathbf{P})\) for every candidate \(k\)-set \(\mathbf{P}\) is combinatorially expensive (different \(\mathbf{P}\) produce different submatrices). In fact, exact minimization of \(L(\mathbf{P})\) by exhaustive search is combinatorial: there are \(\binom{N}{k}\) candidate \(k\)-sets, and evaluating one candidate requires computing the rank of an \(N\times k\) submatrix, which costs \(O(Nk^2)\)
(for \(k\le N\)) using Gaussian elimination. 
Hence, a brute-force (BF) solver requires \(T_{BF}=\binom{N}{k}\cdot O(Nk^2)\) time and is impractical except for very
small \(N,k\). For this reason, we use \textsc{ScoreGreedy} (SG) as a fast initializer so we adopt a lightweight row-based surrogate and search over rows instead of enumerating all \(\mathbf{P}\). Algorithm \textsc{ScoreGreedy} scores each row \(i\) by
\(
a_i\triangleq \|G^{(i)}_{P,A}\|_0,\) \(f _i\triangleq \|G^{(i)}_{P,F}\|_0\),\(s_i\triangleq f_i- a_i,
\)
selects the top-\(k\) rows by \(s_i\), and returns their index set \(\mathbf{P}\). This is fast (cost dominated by \(O(N^2)\) computing the counts plus \(O(N\log N)\) to sort, which gives totally \(T_{SC}= O(N^2)\) time) and intentionally favors rows that touch many frozen columns but few information columns.

The following theorem, Theorem~\ref{thm:combined}, justifies this surrogate choice: it gives a deterministic bound showing that controlling the per-row counts $a_i$ yields an upper bound on the leakage. Thus, minimizing the sum $\sum_{i\in\mathbf{P}} a_i$ over $k$-sets is a natural surrogate objective for leakage minimization.

\begin{theorem}[ScoreGreedy leakage control]
\label{thm:combined}
Let \(G\in\mathbb F_2^{N\times N}\) be the generator matrix and let the column partition be \(\mathbf{A}\) (information) and \(\mathbf{F}\) (frozen). For each row \(i\in[1:N]\). Let \(1 \le k \le |\mathbf{F}|\) and it denotes that we choose the \(k\)-set returned by ScoreGreedy (top-\(k\) rows by \(s_i\)). For every \(\mathbf{P}\subset[1:N]\),
  \begin{align}
  L(\mathbf{P})\le \operatorname{rank}(G_{\mathbf{A,P}}) \le \sum_{i\in \mathbf{P}, |\mathbf{P}|=k} a_i.
  \end{align}
\end{theorem}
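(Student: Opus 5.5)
The plan is to prove the two inequalities separately. The first follows from the leakage certificate~\eqref{eq:LP} and subadditivity of rank. The second follows from subadditivity of rank over a sum of rank-one pieces, one per public coordinate.

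\emph{First inequality.} By the Leakage certificate theorem, \(L(\mathbf{P})=\operatorname{rank}(G_{\mathbf{P}})-\operatorname{rank}(G_{\mathbf{F,P}})\), where \(G_{\mathbf{P}}\) is obtained by stacking \(G_{\mathbf{A,P}}\) on top of \(G_{\mathbf{F,P}}\). The rowspace of a stacked matrix is the sum of the two rowspaces, so
\[
\operatorname{rank}(G_{\mathbf{P}})=\dim(\mathbf{W}_A+\mathbf{V})\le \dim \mathbf{W}_A+\dim\mathbf{V}-\dim(\mathbf{W}_A\cap \mathbf{V})\le \operatorname{rank}(G_{\mathbf{A,P}})+\operatorname{rank}(G_{\mathbf{F,P}}).
\]
Here \(\mathbf{W}_A=\operatorname{rowspace}(G_{\mathbf{A,P}})\) and \(\mathbf{V}=\operatorname{rowspace}(G_{\mathbf{F,P}})\), as in Theorem~\ref{thm:leak-matrix-R}. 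Subtracting \(\operatorname{rank}(G_{\mathbf{F,P}})\) gives \(L(\mathbf{P})\le\operatorname{rank}(G_{\mathbf{A,P}})\). Equivalently, by Theorem~\ref{thm:leak-matrix-R}, \(L(\mathbf{P})=\operatorname{rank}(G_{\mathbf{A,P}}R)\le\operatorname{rank}(G_{\mathbf{A,P}})\).

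\emph{Second inequality.} Write \(G_{\mathbf{A,P}}\) column by column, one column per public index \(i\in\mathbf{P}\). The column for index \(i\) is the restriction of row \(i\) of \(G_N\) to the information positions, i.e.\ \(G^{(i)}_{P,A}\), whose Hamming weight is \(a_i\). Then
\[
\operatorname{rank}(G_{\mathbf{A,P}})\le\sum_{i\in\mathbf{P}}\operatorname{rank}\big(\text{column } i\big)=\bigl|\{i\in\mathbf{P}:a_i>0\}\bigr|\le\sum_{i\in\mathbf{P}}a_i.
\]
The first step uses subadditivity of rank over the column decomposition, and the last uses that every nonzero column has \(a_i\ge 1\). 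This also gives the sharper intermediate bound \(\operatorname{rank}(G_{\mathbf{A,P}})\le\min\{|\mathbf{P}|,\,|\mathbf{A}|\}\). The chain holds for every \(\mathbf{P}\), so in particular for the \(k\)-set returned by \textsc{ScoreGreedy}. The condition \(k\le|\mathbf{F}|\) is not needed for the inequality itself.

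\emph{Main obstacle.} The mathematics is routine; the difficulty is notational. The paper's convention mixes rows and columns: \(G_{\mathbf{A,P}}\) is \(|A|\times|P|\), yet ``row \(i\)'' in the score \(a_i\) refers to the codeword coordinate \(i\). I will fix the identification that each \(i\in\mathbf{P}\) contributes exactly one vector, the \(i\)-th coordinate's dependence on the information bits, with support size \(a_i\). I will verify this against Example~1 (\(\mathbf{P}=\{4\}\), \(a_4=1\), \(L=1\)) before writing the argument. Once that correspondence is fixed, both steps are standard rank inequalities over \(\mathbb F_2\).
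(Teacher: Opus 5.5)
Your proposal is correct and follows the paper's proof essentially step for step. Subadditivity of rank for the stacked matrix \(G_{\mathbf{P}}\) gives \(L(\mathbf{P})\le\operatorname{rank}(G_{\mathbf{A,P}})\), and bounding that rank by the number of nonzero columns, each of weight at least one, gives \(\sum_{i\in\mathbf{P}}a_i\). One small slip: since \(\mathbf{x}=\mathbf{u}G_N\), the vector attached to coordinate \(i\) is column \(i\) of \(G_N\) restricted to \(\mathbf{A}\), not row \(i\). Your working identification (``the \(i\)-th coordinate's dependence on the information bits'') is the right one, but your planned check with \(\mathbf{P}=\{4\}\) cannot distinguish the two readings. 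The case \(\mathbf{A}=\{4\}\), \(\mathbf{P}=\{1\}\) does: the row reading gives \(a_1=0<1=\operatorname{rank}(G_{\mathbf{A,P}})\).
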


\textbf{Proof:}
Subadditivity of rank gives \(\operatorname{rank}(G_{\mathbf{P}})\le\operatorname{rank}(G_{\mathbf{A,P}})+\operatorname{rank}(G_{\mathbf{F,P}})\). Since \(L(\mathbf{P})=\operatorname{rank}(G_{\mathbf{P}})-\operatorname{rank}(G_{\mathbf{F,P}})\), we will have \(L(\mathbf{P}) \le \operatorname{rank}(G_{\mathbf{A,P}})\). The rank of \(G_{\mathbf{A,P}}\) is at most the number of its nonzero columns, which in turn is \(\le\sum_{i\in \mathbf{P}}a_i\) because each nonzero column contributes at least one 1 among the rows \(\mathbf{P}\). $\Box$

\begin{algorithm}
  \caption{ScoreGreedy Algorithm}
  \label{alg:ScoreGreedy}
  \begin{algorithmic}[1]
    \Require $G_N$, $A$, $F$, budget $k$
    \Ensure public set $P$
    \For{$i\gets 1$ to $N$}
      \State $a_i\gets \|G^{(i)}_{P,A}\|_0,\quad f_i\gets\|G^{(i)}_{P,F}\|_0$
      \State $s_i\gets f_i- a_i$
    \EndFor
    \State $P\gets$ top-$k$ indices by $s_i$ (tie-break as desired)
    \State \Return $P$
  \end{algorithmic}
\end{algorithm}


\section{Relations to previous work} \label{sec:Relations}

Our problem departs from the canonical wiretap formulation in several crucial respects. Concretely, we intentionally exploit a publicly available resource and publish a chosen subset of codeword coordinates over a public subchannel; an eavesdropper may observe the channel \emph{inputs} corresponding to those published coordinates and attempt to recover parts of the message. By contrast, the classical wiretap literature typically assumes the eavesdropper observes the \emph{output} of an adversarial channel and asks which secrecy rates are achievable, constructing codes so that the eavesdropper's information vanishes asymptotically as the blocklength \(N\to\infty\) \cite{wyner1975,Wei2015,BroadcastHassani}. The strong, operationally different assumption that the adversary may see the channel \emph{input} on a public resource allows our method to treat an eavesdropper's channel as a public channel and deliberately publish selected codeword coordinates on it; the analysis then determines how many bits (and which linear combinations) are exposed when those coordinates are placed on the public resource. Accordingly, rather than providing asymptotic secrecy-rate guarantees, our approach yields exact finite-\(N\) leakage measures and constructive selection algorithms (algebraic/rank certificates) tailored to the published-index setting.

In this work, at finite blocklength, we quantify the leakage induced by a published index set \(\mathbf{P}\) when the eavesdropper observes the public-channel input. We provide exact rank-based leakage certificates and constructive descriptions of the exposed linear combinations.

Moreover, our model departs from common cryptographic assumptions that consider a single public channel available to an adversary: we explicitly model a pair of physical links (private and public) and allow the encoder to schedule different codeword coordinates across them. In other words, the transmitter will deliberately send some coordinates over a protected private link while publishing others on a public link, and leakage must be assessed for this mixed public/private usage rather than for a single-channel wiretap scenario.

Unlike the classical wiretap-II model, \cite{ozarow1984wiretap2}, which studies secrecy against observation of a subset of codeword symbols, our framework treats the public symbols as intentionally exposed and quantifies their exact finite-length leakage under a polar-based masking model.


\section{Conclusion} \label{sec:Conclusion}
We presented RankGuard-Polar, a finite-block toolkit for publishing polar-code coordinates while certifying exact information leakage. The key technical insight is the rank-difference identity linking mutual information to submatrix ranks, which enables both constructive extractors and fast GF(2) certification procedures. RankGuard-Polar is intended as an operational audit-and-control layer: it does not replace asymptotic wiretap constructions but provides finite-\(N\) guarantees and post-hoc certificates that are useful in real systems that must exploit public resources.

\bibliographystyle{unsrt}
\bibliography{bibD}

\end{document}